\documentclass{ws-ijbc}

\usepackage{hyperref}
\usepackage{booktabs}
\usepackage{float}
\usepackage{framed}
\usepackage[T1]{fontenc}
\usepackage{enumitem}
\usepackage{multicol}
\usepackage{ws-rotating}     
\usepackage{graphicx}
\usepackage{epstopdf}
\usepackage{tabularx}
\usepackage{colortbl}
\usepackage{arydshln}
\usepackage[toc,page]{appendix} 

\DeclareMathOperator{\diag}{diag}

\usepackage[titlenumbered,ruled]{algorithm2e}
\usepackage{listings}
\usepackage{lmodern}
\usepackage[usenames,dvipsnames]{xcolor}

\usepackage{algorithmicx}
\usepackage{algpseudocode}

\usepackage[toc,page]{appendix}

\algrenewcommand\alglinenumber[1]{\tiny #1:}
\usepackage{mathtools}
\usepackage{amsmath}%
\usepackage{amsfonts}%
\usepackage{amssymb}%
\usepackage{graphicx}
\usepackage{color}
\usepackage{cases}
\def\HiLi{\leavevmode\rlap{\hbox to \hsize{\color{yellow!50}\leaders\hrule height .8\baselineskip depth .5ex\hfill}}}

\newlist{myenumi}{description}{10}
\setlist[myenumi]{labelindent=\parindent, leftmargin=*, label=(\roman*), align=left}
\setlist[myenumi]{leftmargin=0pt}

\newsavebox{\listingbox}

\definecolor{darkgreen}{rgb}{0.1, 0.5, 0.1}
\begin{document}
\catchline{}{}{}{}{} 


\title{Improved Matlab code for Lyapunov exponents of fractional order systems}

\author{MARIUS-F. DANCA}

\address{STAR-UBB Institute,\\
Babes-Bolyai University, Cluj-Napoca, Romania
\\email: m.f.danca@gmail.com}

\maketitle

\begin{history}
\received{(to be inserted by publisher)}
\end{history}

\begin{abstract}
\begin{abstract}
This paper presents an improved Matlab routine, \texttt{FO\_LE}, for numerical computation of the Lyapunov exponents of fractional-order systems modeled by Caputo's derivative, conceived as an enhanced version of the former \texttt{FO\_Lyapunov} and \texttt{FO\_NC\_Lyapunov} codes for commensurate and non-commensurate orders, respectively \cite{dd2,dd3}. The proposed approach replaces the Gram--Schmidt orthogonalization procedure with QR-based reorthonormalization, and uses the new quadratic \texttt{LIL} predictor--corrector scheme for the integration of the extended variational system. Compared with the former implementations, the present routine benefits from the higher-order of the fractional integrator, \texttt{LIL}, and applies to both commensurate and non-commensurate models. As the previous code, \texttt{FO\_LE} retains the full memory structure of the underlying Caputo model. The Matlab code for the \texttt{LIL} solver and for the computation of Lyapunov exponents \texttt{FO\_LE} are provided, while a fast implementation of \texttt{LIL}, for commensurate and non-commensurate orders, \texttt{LIL\_nc}, is available on MathWorks File Exchange \cite{lilnc}. A benchmark problem with exact solution is used to compare the \texttt{LIL}-based solver against ABM-type methods, whereas the Rabinovich--Fabrikant system illustrates the computation of Lyapunov exponents in different dynamical regimes. The results indicate that the proposed implementation is a compact, robust, and efficient tool for the numerical study of stability and chaos in fractional-order systems.
\end{abstract}
\end{abstract}

\keywords{LIL scheme for Caputo fractional differential equations; Finite time local Lyapunov exponents; QR decomposition}


\section{Introduction}
\begin{multicols}{2}

Despite its long history, fractional calculus was for many years only rarely used in mainstream physics and engineering, partly because fractional-order (FO) derivatives were often regarded as lacking a transparent geometrical or physical interpretation (see, e.g., \cite{pod}). In recent years, however, this situation has changed significantly. Fractional models are now increasingly recognized as effective tools for describing memory, hereditary effects, anomalous transport, and complex multiscale dynamics, and they have found applications in mechanics, electronics, control, finance, chemistry, and biophysics. For mathematical results on existence, uniqueness, continuous dependence on parameters, and asymptotic stability of solutions of fractional differential equations (FDEs) with general nonlinearities, see, for example, \cite{old,cap,existenta,kill,pod2,gore}.

Lyapunov exponents (LEs) characterize the average exponential rates at which nearby trajectories separate or approach each other in phase space. From a physical viewpoint, they provide quantitative information on predictability, stability of attractors and invariant sets, and the possible presence of sensitive dependence on initial conditions, one of the main signatures of chaos.

 At the same time, the physical significance and practical usefulness of LEs have also been debated. Their numerical approximation may also be affected by non-negligible computational errors, especially in systems with long memory and strong sensitivity to discretization. In this respect, Sprott noted that a significant number of LEs reported in the literature are either incorrect or stated with unjustified precision.

The computation of LEs remains a central topic in nonlinear dynamics and computational physics. In practice, LEs continue to be widely used as diagnostic indicators in the numerical investigation of dissipative, oscillatory, and chaotic regimes, and a variety of efficient algorithms have been developed for their approximation. By contrast, for non-commensurate FO systems, the literature is still limited and dedicated numerical codes remain scarce. This makes the development of robust and efficient algorithms for computing LEs in the non-commensurate FO framework particularly relevant.

\section{Problem formulation and variational system}

We consider the Initial Value Problem (IVP) for a system of Caputo fractional
differential equations
\begin{equation}
{}^{C}D_t^{\alpha_i}x_i(t)=f_i(t,x(t)),
\label{eq:gen_system_components}
\end{equation}
for \(i=1,2,\dots,n_e\), with initial condition
\begin{equation}
x(0)=x_0\in\mathbb{R}^{n_e},
\label{eq:gen_system_ic}
\end{equation}
where
\[
x(t)=\bigl(x_1(t),x_2(t),\dots,x_{n_e}(t)\bigr)^T
\]
and
\[
\alpha=(\alpha_1,\alpha_2,\dots,\alpha_{n_e})^T,
\qquad 0<\alpha_i<1.
\]

The above formulation includes both the non-commensurate and commensurate case, when
\[
\alpha_1=\alpha_2=\cdots=\alpha_{n_e}=\alpha\in(0,1),
\]
and the non-commensurate case, when at least two components of \(\alpha\)
are different.

For \(0<\alpha_i<1\), the Caputo derivative of \(x_i\) is defined by
\begin{equation}
{}^{C}D_t^{\alpha_i}x_i(t)
=
\frac{1}{\Gamma(1-\alpha_i)}
\int_0^t
\frac{x_i'(\tau)}{(t-\tau)^{\alpha_i}}\,d\tau,
\label{eq:caputo_definition_general}
\end{equation}
for \(i=1,\dots,n_e\). The Caputo framework is especially convenient in
applications because it allows the use of classical initial conditions of
the form \eqref{eq:gen_system_ic}. Throughout this work, the Caputo fractional derivative is understood with
initial point \(0\).
It is well known that
\eqref{eq:gen_system_components}-\eqref{eq:gen_system_ic} is equivalent to
the Volterra integral equation
\begin{equation*}
x_i(t)
=
x_i(0)
+
\frac{1}{\Gamma(\alpha_i)}
\int_0^t
(t-\tau)^{\alpha_i-1}f_i(\tau,x(\tau))\,d\tau,
\label{eq:volterra_general_components}
\end{equation*}
for \(i=1,\dots,n_e\). This representation is the basis of most
predictor--corrector schemes for fractional differential equations,
including the methods used in this paper.

To compute LEs, one needs in addition the variational system
associated with the IVP \eqref{eq:gen_system_components}-\eqref{eq:gen_system_ic}. Let \(x(t)\) be a
reference trajectory and let \(\Phi(t)=(\phi_{ij}(t))_{i,j=1}^{n_e}\) be
the perturbation matrix. Then, the linearized dynamics along \(x(t)\) is
described by
\begin{equation}
{}^{C}D_t^{\alpha_i}\phi_{ij}(t)
=
\sum_{k=1}^{n_e}
\frac{\partial f_i}{\partial x_k}(t,x(t))\,\phi_{kj}(t),
\label{eq:general_variational_components}
\end{equation}
for \(i,j=1,\dots,n_e\), together with the initial condition
\begin{equation}
\Phi(0)=I_{n_e},
\label{eq:general_variational_ic}
\end{equation}
where \(I_{n_e}\) denotes the identity matrix. Thus, the original system and
its variational equations define an extended system consisting of
\(n_e+n_e^2\) equations.

The idea is to integrate the extended system over
successive intervals of length \(h_{\mathrm{norm}}\), to orthonormalize the
perturbation vectors after each such interval, and to accumulate the
associated stretching factors. More precisely, if
\(t_{\mathrm{start}}=0\) and \(T\) denote the initial and final
times and if \(h_{\mathrm{norm}}>0\) is the renormalization step, then after
each integration step over \([t,t+h_{\mathrm{norm}}]\) one extracts the
updated perturbation matrix \(\Phi\), orthonormalizes its columns, and
obtains stretching factors \(z_k\), \(k=1,\dots,n_e\). These are used to
update the cumulative sums
\begin{equation*}
s_k \leftarrow s_k+\log z_k,
\qquad k=1,\dots,n_e.
\end{equation*}
The finite-time LEs are then approximated by
\begin{equation}
\lambda_k(t)=\frac{s_k}{t},
\qquad k=1,\dots,n_e.
\label{eq:le_general_formula}
\end{equation}
with $t\leftarrow t+h_{norm}$.

Therefore, the exponents are obtained as time averages of the logarithmic
stretching factors generated during the repeated renormalization process.

LEs which measure the exponential growth, or decay, of infinitesimal phase-space perturbations of a chaotic dynamical system can be obtained numerically by using the algorithm proposed in the seminal works of Benettin et al. \cite{bene2}, the first work where the Gram-Schmidt (GS) orthogonalization procedure is proposed to compute LEs for continuous systems of integer order (see also \cite{shima}, or \cite{baker} where the Benettin algorithm is presented in Basic langauge).
To avoid overflow, the algorithm calculates the divergence of nearby trajectories for finite timesteps and renormalizes to unity after a finite number of steps using the GS procedure \cite{lili,christ}).

The Benettin algorithm is presented in Algorithm \ref{alg1}.

\begin{remark}
The numerically obtained LEs of integer or fractional order, are generally finite-time and local, since they are computed over a finite interval and along the particular trajectory issued from the chosen initial condition, while the theoretical exponents are asymptotic quantities defined in the limit \(t\to\infty\).
\end{remark}

First Matlab codes to calculate LEs of the systems modeled by the IVP \eqref{eq:gen_system_components}-\eqref{eq:gen_system_ic} are presented in \cite {dd2} and \cite{dd3}, for commensurate order case and incommensurate order case, respectively (the codes can be downloaded at \cite{com}, \cite{noncom}). Both algorithms use the GS procedure.

In this framework, the commensurate and non-commensurate cases are treated
in exactly the same way at the algorithmic level. The only structural
difference lies in the order vector: in the commensurate case all equations
have the same order, whereas in the non-commensurate case the components of
\(\alpha\) may differ and the same multi-order structure has to be extended
to the variational equations. Apart from this, the Benettin
philosophy remains unchanged: one integrates the extended FO
system, orthonormalizes the perturbation basis, and updates the running
averages \eqref{eq:le_general_formula}.

In practical computations, the orthonormalization step can be performed
either by the classical GS procedure or by a QR decomposition of
the perturbation matrix as proposed in this paper.

To guarantee existence and uniqueness of solutions, we assume that
\(f:[0,T]\times\mathbb{R}^{n_e}\to\mathbb{R}^{n_e}\) is continuous and
satisfies a Lipschitz condition with respect to the state variable, namely,
there exists a constant \(L>0\) such that
\begin{equation*}
\|f(t,u)-f(t,v)\|\le L\|u-v\|,
\end{equation*}
for all \(t\in[0,T]\) and all \(u,v\in\mathbb{R}^{n_e}\). Under such
standard assumptions, the IVP admits a unique solution on
\([0,T]\) (see, for example, \cite{old,cap,existenta,kill,pod2}).

For the general non-commensurate case, it is natural to appeal to the
existence--uniqueness theory for nonlinear multi-order Caputo systems; see,
for example, \cite{diex}.

\begin{theorem}\label{ttt}
Assume that \(f:[0,T]\times\mathbb{R}^{n_e}\to\mathbb{R}^{n_e}\) is
continuous in \(t\), of class \(C^{1}\) with respect to \(x\), and that
both \(f\) and \(D_xf\) are locally Lipschitz in \(x\).
Let \(x(t)\)
be a solution of
\eqref{eq:gen_system_components}--\eqref{eq:gen_system_ic}. Then, the
associated variational system along \(x(t)\) is given by
\begin{equation}
\begin{split}
{}^{C}D_t^{\alpha_i}\phi_{ij}(t)
&=
\sum_{k=1}^{n_e}
\frac{\partial f_i}{\partial x_k}(t,x(t))\,\phi_{kj}(t),\\
&\qquad i,j=1,\dots,n_e,
\end{split}
\label{eq:general_variational_components_thm}
\end{equation}
with initial condition
\begin{equation}
\Phi(0)=I_{n_e},
\label{eq:general_variational_ic_thm}
\end{equation}
where \(\Phi(t)=(\phi_{ij}(t))_{i,j=1}^{n_e}\). Equivalently, the
right-hand side of \eqref{eq:general_variational_components_thm} is the
\((i,j)\)-entry of the matrix product
\[
D_xf(t,x(t))\,\Phi(t).
\]

Moreover, the coupled system consisting of
\eqref{eq:gen_system_components} and
\eqref{eq:general_variational_components_thm}, together with the initial
conditions \eqref{eq:gen_system_ic} and
\eqref{eq:general_variational_ic_thm}, defines an extended multi-order
Caputo system which admits a unique local solution on every interval on
which \(x(t)\) exists.
\end{theorem}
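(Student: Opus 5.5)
The plan has two parts. First I justify the form of the variational system \eqref{eq:general_variational_components_thm} by differentiating the Volterra representation with respect to the initial data. Then I obtain existence and uniqueness of the extended system from the multi-order existence--uniqueness theory cited above (\cite{diex}), applied to an augmented vector field.

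\textbf{Form of the variational system.} Write the solution as $x(t)=x(t;x_0)$ and use the equivalent Volterra form
\[
x_i(t;x_0)=x_{0,i}+\frac{1}{\Gamma(\alpha_i)}\int_0^t(t-\tau)^{\alpha_i-1}f_i(\tau,x(\tau;x_0))\,d\tau .
\]
The steps are as follows.

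1. Show that $x_0\mapsto x(\cdot;x_0)$ is differentiable into $C([0,T'],\mathbb{R}^{n_e})$ on any compact subinterval $[0,T']$ where the solution exists. For this, consider the fixed-point map $\mathcal{T}(x_0,x)$ defined by the right-hand side above. Since $f$ is $C^1$ in $x$ and continuous in $t$, $\mathcal{T}$ is $C^1$ on a neighbourhood of the bounded trajectory. Its partial derivative in $x$ is the weakly singular Volterra operator
\[
(\delta x)_i\mapsto\frac{1}{\Gamma(\alpha_i)}\int_0^t(t-\tau)^{\alpha_i-1}\bigl(D_xf\,\delta x\bigr)_i\,d\tau .
\]
This operator has spectral radius zero, because its iterates are bounded by Mittag-Leffler-type majorants of the form $\sum_m (L'\Gamma(\bar\alpha)t^{\bar\alpha})^m/\Gamma(m\bar\alpha+1)$, with $\bar\alpha=\min_i\alpha_i$ after enlarging constants. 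Hence $I-\partial_x\mathcal{T}$ is invertible, and the implicit function theorem in Banach space gives the differentiability.

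2. Differentiate the fixed-point identity with respect to $x_{0,j}$ and set $\phi_{ij}=\partial x_i/\partial x_{0,j}$. This gives
\[
\phi_{ij}(t)=\delta_{ij}+\frac{1}{\Gamma(\alpha_i)}\int_0^t(t-\tau)^{\alpha_i-1}\sum_k\frac{\partial f_i}{\partial x_k}(\tau,x(\tau))\phi_{kj}(\tau)\,d\tau .
\]
By the Volterra--Caputo equivalence recalled earlier (applied componentwise with order $\alpha_i$), this integral equation is exactly \eqref{eq:general_variational_components_thm}--\eqref{eq:general_variational_ic_thm}. It is also visibly the $(i,j)$ entry of $D_xf\,\Phi$.

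\textbf{Existence and uniqueness of the extended system.} Define $y=(x,\operatorname{vec}\Phi)\in\mathbb{R}^{n_e+n_e^2}$ and the order vector $(\alpha_1,\dots,\alpha_{n_e},\alpha_1,\dots,\alpha_1,\dots,\alpha_{n_e},\dots,\alpha_{n_e})$, where each $\phi_{ij}$ carries order $\alpha_i$. The augmented field is $F(t,y)=(f(t,x),\operatorname{vec}(D_xf(t,x)\Phi))$. This $F$ is continuous and locally Lipschitz in $y$: on a bounded set,
\[
\|D_xf(t,x)\Phi-D_xf(t,\tilde x)\tilde\Phi\|\le \|D_xf(t,x)\|\,\|\Phi-\tilde\Phi\|+\mathrm{Lip}(D_xf)\,\|x-\tilde x\|\,\|\tilde\Phi\|,
\]
which uses the assumed local Lipschitz property of $D_xf$. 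The multi-order local existence--uniqueness theorem \cite{diex} therefore gives a unique local solution.

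To extend it to every interval $[0,T']$ on which $x$ exists, I argue as follows. The $x$-component is already determined and bounded there. The $\Phi$-equation is then linear with continuous bounded coefficients $A(t)=D_xf(t,x(t))$, so it cannot blow up. A fractional Gronwall (Henry--Gronwall) inequality applied to $\max_i|\phi_{ij}|$ gives an a priori bound on $[0,T']$, and a standard continuation argument then extends the solution over the whole interval.

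\textbf{Main obstacle.} I expect the hardest step to be the rigorous differentiability of the solution with respect to $x_0$ in step 1, specifically controlling the weakly singular kernel with mixed orders in the invertibility argument. If the implicit-function route becomes delicate, I would instead verify directly that the difference quotient $(x(t;x_0+\varepsilon e_j)-x(t;x_0))/\varepsilon-\Phi_{\cdot j}(t)$ tends to zero. That difference quotient satisfies a linear Volterra inequality with a remainder that is $o(1)$ uniformly, thanks to the uniform continuity of $D_xf$ on compacts, and the fractional Gronwall lemma then yields the limit.
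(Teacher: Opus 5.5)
Your proof is correct, and in its first half it takes a genuinely different route from the paper. The paper treats \eqref{eq:general_variational_components_thm} essentially as the formal linearization along \(x(t)\). It notes that \(D_xf(t,x(t))\) is continuous, writes the right-hand side as \((D_xf\,\Phi)_{ij}\), and takes \(\Phi(0)=I_{n_e}\) from ``the definition of the fundamental matrix''. Its only analytic step is forming \(F(t,Y)\) with \(Y=(x,\operatorname{vec}\Phi)\) and invoking \cite{diex}.

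You instead prove \(\Phi(t)=\partial x(t;x_0)/\partial x_0\), using the implicit function theorem on \(C([0,T'],\mathbb{R}^{n_e})\) and the quasi-nilpotency of the mixed-order weakly singular Volterra operator. This justifies what the Benettin procedure actually relies on, namely that \(\Phi\) propagates perturbations of the initial data. It also derives the initial condition rather than postulating it. The cost is Banach-space machinery, including uniform continuity of \(D_xf\) on compacts, which the local Lipschitz hypothesis supplies.

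In the second half you follow the paper but supply two things it only asserts: the Lipschitz estimate for \(D_xf(t,x)\Phi\), and the passage from a local solution to a solution on the whole interval where \(x\) exists. Your Step 1 already gives the latter. Invertibility of \(I-K\) on \(C([0,T'])\) makes \(\Phi_{\cdot j}=(I-K)^{-1}e_j\) the unique continuous solution on all of \([0,T']\), and the \(x\)-block decouples. So your Gronwall continuation argument is correct but redundant.

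One bookkeeping slip: with column-stacking \(\operatorname{vec}\) (as in \texttt{RF\_ext}), the extended order vector is \((\alpha_1,\dots,\alpha_{n_e})\) repeated \(n_e+1\) times, i.e.\ \texttt{repmat(q,ne+1,1)} in \texttt{FO\_LE}. The vector you display, with each \(\alpha_i\) repeated in a block, corresponds to row stacking. Your stated rule that \(\phi_{ij}\) carries order \(\alpha_i\) is the correct one.
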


\begin{proof}
Since \(f\) is of class \(C^{1}\) with respect to \(x\), the Jacobian
matrix \(D_xf(t,x(t))\) exists and is continuous along the trajectory
\(x(t)\). Therefore, the linearization of
\eqref{eq:gen_system_components} along \(x(t)\) is well defined. In matrix
form, the variational dynamics is governed by the Jacobian
\(D_xf(t,x(t))\). More precisely, its right-hand side can be written as
\[
\bigl(D_xf(t,x(t))\,\Phi(t)\bigr)_{ij},
\]
and this yields, componentwise,
\eqref{eq:general_variational_components_thm}. The initial condition
\(\Phi(0)=I_{n_e}\) follows from the definition of the fundamental matrix.

Introduce now the extended unknown
\[
Y(t)=\bigl(x(t),\operatorname{vec}\Phi(t)\bigr)
\in\mathbb{R}^{n_e+n_e^2}.
\]
Then, the original system and the variational equations can be rewritten as
a nonlinear multi-order Caputo system of the form
\[
{}^{C}D_t^{\widetilde{\alpha}}Y(t)=F(t,Y(t)),
\]
where \(\widetilde{\alpha}\) is the extended order vector obtained by
repeating the orders \(\alpha_1,\dots,\alpha_{n_e}\) in the variational
block, and
\[
F(t,Y)=
\begin{pmatrix}
f(t,x)\\[1mm]
\operatorname{vec}\bigl(D_xf(t,x)\Phi\bigr)
\end{pmatrix}.
\]
By the assumptions on \(f\) and \(D_xf\), the mapping \(F\) is continuous
in \(t\) and locally Lipschitz in \(Y\) on bounded sets. Hence the
standard existence--uniqueness theorem for nonlinear multi-order Caputo
systems applies to the extended problem (see, for example, \cite{diex}).
Therefore, the extended system admits a unique local solution on every
interval on which the reference trajectory \(x(t)\) exists.
\end{proof}

The commensurate case of Theorem \ref{ttt} is
recovered by taking \(\alpha_1=\cdots=\alpha_{n_e}=q\).

\subsection{QR-based reformulation of the orthonormalization step}

In the classical Benettin algorithm, after each integration interval of length \(h_{\mathrm{norm}}\), the columns of the perturbation matrix are orthonormalized by the GS procedure, and the corresponding stretching factors are used to update the LEs.

In the present implementation, the overall Benettin framework is preserved, but the GS step is replaced by a QR decomposition. Such a replacement is natural, since for a perturbation matrix \(A\in\mathbb{R}^{n_e\times n_e}\), obtained after one renormalization interval, the reduced QR factorization
\[
A = QR,
\]
provides directly an orthonormal basis \(Q\) and an upper triangular matrix \(R\), whose diagonal entries contain the stretching information associated with the current step.

More precisely, let \(A\) denote the matrix whose columns are the evolved perturbation vectors after integrating the extended system on \([t,t+h_{\mathrm{norm}}]\). Instead of applying GS column by column, one computes
\[
[Q,R] = \mathrm{qr}(A,0),
\]
where the Matlab command $qr$ computes the reduced QR decomposition, \(Q\in\mathbb{R}^{n_e\times n_e}\) is orthonormal and \(R\in\mathbb{R}^{n_e\times n_e}\) is upper triangular. The orthonormalized perturbation basis is then given by the columns of \(Q\), which are used as initial perturbation directions for the next integration interval.

To ensure a consistent extraction of the stretching factors, the diagonal elements of \(R\) are taken positive. In practice, this is achieved by introducing the diagonal matrix
\[
D=\operatorname{diag}\big(\operatorname{sign}(\operatorname{diag}(R))\big),
\]
with the convention that zero entries are replaced by \(1\). Then, one redefines
\[
Q \leftarrow QD, \qquad R \leftarrow DR,
\]
so that the diagonal entries of the modified matrix \(R\) are nonnegative. The stretching factors used in the LE update are therefore
\[
z_k = |R_{kk}|, \qquad k=1,\dots,n_e.
\]
The cumulative sums are updated as
\[
s_k \leftarrow s_k + \log z_k,
\]
and the finite-time LEs are computed, as before, by
\[
\lambda_k(t)=\frac{s_k}{t}, \qquad k=1,\dots,n_e,
\]
where $t\leftarrow t+h_{norm}$.

Thus, the QR-based variant is fully equivalent in spirit to the usual GS renormalization procedure, the only difference being the numerical realization of the orthonormalization step. The advantage of the QR formulation is mainly practical: it leads to a more compact Matlab implementation and is generally more robust numerically than the explicit GS recursion.  If \(\Phi=QR\), where \(Q\) is orthonormal and \(R\) is upper triangular, then the columns of \(Q\) form the new perturbation basis, while the diagonal entries of \(R\) contain the stretching information used to update the LEs.

In practice, the QR decomposition is generally preferred to the classical
Gram--Schmidt procedure because it provides a more numerically stable and
robust orthonormalization, especially when the perturbation vectors become
nearly linearly dependent during long integrations.

The Algorithm \ref{alg1} presents the Benettin algorithm with GS procedure. To implement the QR decomposition, the yellow lines in Algorithm \ref{alg1} must be replaced by the following lines

{\setlength{\abovedisplayskip}{2pt}
\setlength{\belowdisplayskip}{2pt}
\setlength{\abovedisplayshortskip}{2pt}
\setlength{\belowdisplayshortskip}{2pt}
\begin{itemize}[leftmargin=1.5em,itemsep=0pt,topsep=0pt,parsep=0pt]
\item compute the reduced QR factorization
\[
\Phi = QR;
\]
\item enforce nonnegative diagonal entries in \(R\) and set
\[
z_k = |R_{kk}|,\qquad k=1,\dots,n_e;
\]
\item replace the perturbation basis by the columns of \(Q\).
\end{itemize}
}

\section{LIL Predictor-Corrector method}

In \cite{dd2} and \cite{dd3} the LEs for the commensurate and non-commensurate case, respectively, are determined via the Benettin algorithm with GS procedure and using the ABM method for FDEs \cite{kai}.

In the present work, the Benettin framework to find LEs is used, via:

\begin{itemize}[leftmargin=1.5em,itemsep=0pt,topsep=0pt,parsep=0pt]
\item the quadratic Lagrange Interpolation at the Last step (LIL) predictor--corrector scheme \cite{dd5};
\item  the QR factorization.
\end{itemize}
The LIL scheme, a high order method \emph{implicit predictor--corrector method
with full memory} (see Box 1), is a \emph{quadratic two-step backward interpolation} method for FDEs modeled by Caputo's derivative. The analysis and the properties of the scheme for the commensurate case, are  presented in \cite{dd5}. The Matlab code for the commensurate case is presented in Appendix \ref{a} and is called as

\vspace{4mm}
\noindent \fbox{\texttt{[t,x] = LIL(fun,tspan,x0,alpha,h)}}
\vspace{4mm}

where $fun$ is the function modelling the system, $tspan=[0,T]$, $\alpha$ is the fractional order, $x0$ the initial condition, and $h$ the integration step size. Optional input and output parameters are described in the Appendix \ref{a}.

A faster variant \texttt{LIL} for the commensurate case can be found at Mathworks \cite{lil}, while for the non-commensurate case, a faster code, \texttt{LIL\_nc}, can be found at Mathworkd \cite{lilnc}.

A typical command to run the \texttt{LIL\_nc} is

\vspace{4mm}
\noindent \fbox{\texttt{[t,x] = LIL\_nc(fun,tspan,x0,alpha,h)}}

\vspace{4mm}
where $\alpha=(\alpha_1,\alpha_2,...,\alpha_{ne})$, with $ne$ the size of the system. Here also optional parameters can be used.

\vspace{3mm}
\emph{The non-commensurate variant, \emph{\texttt{LIL\_nc}} utilized in this paper, can be used for both commensurate and non-commensurate cases and.
}
\vspace{3mm}

\begin{remark}
As shown in \cite{dd5} under suitable smoothness assumptions, the LIL method may reach a third-order accuracy and also may provide a more accurate approximation than the Adams--Bashforth--Moulton (ABM) scheme for the same step size. It should also be emphasized that the Matlab solvers \texttt{FDE12} and \texttt{fde12\_nc2}\footnote{The code \texttt{fde12\_nc2} is a non-commensurate predictor--corrector implementation belonging to the \texttt{fde12} family of FFT-accelerated product-integration PECE solvers introduced and documented by Garrappa; see, in particular, the survey on numerical solution of fractional differential equations in \cite{gara}).} used in the previous Matlab codes for LEs \cite{com} and \cite{noncom}, are not basic ABM implementations, but fast FFT-based variants of the ABM predictor--corrector method.
\end{remark}

Consider the following benchmark test of the schemes \texttt{LIL\_nc}, \texttt{fde12\_nc2} and \texttt{ABM\_nc} (non-commensurate variant of the genuine ABM method) to assess the performances in the case of the following Caputo two-dimensional non-commensurate fractional system for $t\in[0,1]$:

\begin{equation}
\label{eq:R2_nc_test}
\left\{
\begin{aligned}
{}^C D_t^{\alpha_1} x_1(t) &= -x_1(t),\\[1mm]
{}^C D_t^{\alpha_2} x_2(t) &= -2x_2(t)
+\bigl(x_1(t)-x_{1,e}(t)\bigr)^2
\end{aligned}
\right.
\end{equation}
with initial condition
\begin{equation}
x_1(0)=x_2(0)=1,
\label{eq:R2_nc_ic}
\end{equation}
and non-commensurate orders
\begin{equation*}
\alpha_1=0.9,\qquad \alpha_2=0.8.
\label{eq:R2_nc_alpha}
\end{equation*}

Here
\begin{equation*}
x_{1,e}(t)=E_{\alpha_1}(-t^{\alpha_1}),
\label{eq:R2_nc_x1e}
\end{equation*}
where $E_\alpha(\cdot)$ denotes the one-parameter Mittag--Leffler function
\[
E_\alpha(z)=\sum_{k=0}^{\infty}\frac{z^k}{\Gamma(\alpha k+1)}.
\]

The first equation in \eqref{eq:R2_nc_test} is the standard linear Caputo
test equation, whose exact solution is
\[
x_1(t)=E_{\alpha_1}(-t^{\alpha_1}).
\]
Consequently, along the exact solution the nonlinear correction term in the
second equation vanishes identically, and the second component reduces to
\[
{}^C D_t^{\alpha_2}x_2(t)=-2x_2(t),\qquad x_2(0)=1,
\]
with exact solution
\[
x_2(t)=E_{\alpha_2}(-2t^{\alpha_2}).
\]
Therefore, the exact solution of \eqref{eq:R2_nc_test}--\eqref{eq:R2_nc_ic}
is (Fig. \ref{fig1})
\begin{equation*}
x_e(t)=
\begin{bmatrix}
E_{\alpha_1}(-t^{\alpha_1})\\[1mm]
E_{\alpha_2}(-2t^{\alpha_2})
\end{bmatrix}.
\end{equation*}

This benchmark is convenient for validating the code for several reasons.
First, it is genuinely non-commensurate. Second, it is mildly nonlinear, due to the
quadratic correction in the second equation. Third, despite this
nonlinearity, the exact solution is known explicitly in terms of
Mittag--Leffler functions, so that the numerical error can be evaluated
directly, fact which allows a direct comparison.

Table~\ref{tab:R2_nc_LIL_ABM} is obtained through a standard mesh-refinement benchmark for the considered non-commensurate test problem. More precisely, the problem is solved with the three numerical methods \texttt{LIL\_nc}, \texttt{fde12\_nc2}, and classical \texttt{ABM\_nc} on a sequence of successively refined uniform meshes, corresponding to the step sizes
\[
h=0.01,\ 0.005,\ 0.0025,\ 0.00125,\ 0.000625.
\]
For each value of \(h\), the numerical solution is computed on the grid \(t_n=t_0+nh\), and then compared with the exact solution evaluated at the same grid points. The reported errors are measured in the discrete maximum norm, namely
\[
E(h)=\max_n \|x_n-x(t_n)\|_\infty,
\]
where \(x_n\) denotes the numerical approximation and \(x(t_n)\) the exact solution at the node \(t_n\). In this way one obtains, for each method, the error columns \(E_{\mathrm{LIL}}\), \(E_{\mathrm{fde12\_nc2}}\), and \(E_{\mathrm{ABM}}\).

The columns \(p_{\mathrm{LIL}}\), \(p_{\mathrm{fde12\_nc2}}\), and \(p_{\mathrm{ABM}}\) represent the experimental convergence orders, computed whenever two consecutive mesh levels are available. These values are determined by the standard formula
\[
p=\frac{\log\!\big(E(h)/E(h/2)\big)}{\log 2}
=\log_2\!\left(\frac{E(h)}{E(h/2)}\right).
\]
Therefore, the first row contains no value for \(p\), since no previous refinement level is available for comparison.

Finally, the CPU columns are obtained by measuring the execution time of each solver for every step size. Hence, each row of the table summarizes, for a fixed step size \(h\), the corresponding error, the observed convergence order, and the computational time for the three methods under comparison.

The call of \texttt{LIL\_nc} for the numerical solution of the system \eqref{eq:R2_nc_test}-\eqref{eq:R2_nc_ic} is

\begin{lstlisting}
[t,x] = LIL_nc(@test_R2,[0,1],[1,1],[0.9,0.8],...
0.001);
\end{lstlisting}
where the function \texttt{test\_R2} is
\begin{lstlisting}
function dx = test_R2(t,x)
alpha1 = 0.9;
x1e = mlf(alpha1,1,-t^alpha1);
dx = zeros(2,1);
dx(1) = -x(1);
dx(2) = -2*x(2) + (x(1)-x1e)^2;
end
\end{lstlisting}

The parameter \(\alpha_2\) does not appear explicitly in the function
 \texttt{test\_R2} because this function defines only the right-hand side
\(f(t,x)\) of the system, whereas the fractional orders are handled separately by the numerical solver. In particular, the second component is given by
\[
f_2(t,x)=-2x_2+(x_1-x_{1,e}(t))^2,
\]
which does not involve \(\alpha_2\) explicitly. By contrast, \(\alpha_1\) is needed in the code because the exact solution
\[
x_{1,e}(t)=E_{\alpha_1}(-t^{\alpha_1})
\]
appears in the forcing term. Thus, \(\alpha_2\) affects the dynamics through the operator
\({}^C D_t^{\alpha_2}\), but not through the explicit expression of the vector field.

In the comparison of \texttt{LIL\_nc} with \texttt{fde12\_nc2}, the latter
yields the smallest errors for all tested step sizes, however, the difference with respect
to \texttt{LIL\_nc} remains moderate: both solvers produce errors of the same
order of magnitude, and for the finest discretizations the gap becomes quite
small. At the same time, \texttt{LIL\_nc} exhibits a very stable observed
convergence rate, close to $1.61$ throughout the refinement process, whereas
the experimental orders of \texttt{fde12\_nc2} remains under
$1.57$, a not negligible characteristic. Indeed, as shown in \cite{dd5}
the LIL approach is designed to possess, under suitable
smoothness assumptions, a higher convergence order than the classical
Adams--Bashforth--Moulton predictor--corrector method. Since
\texttt{fde12\_nc2} is itself an ABM-type code, the fact that
\texttt{LIL\_nc} retains here a slightly larger observed order is consistent
with that theoretical advantage.

Moreover, the comparison is particularly meaningful because
\texttt{fde12\_nc2} is not a naive ABM implementation: it is an optimized
solver in which the history terms are evaluated by FFT-based fast
convolution. Hence, the CPU times reported for \texttt{fde12\_nc2}
correspond to an already accelerated predictor--corrector realization. In
this sense, the results show that \texttt{LIL\_nc} remains competitive with
an optimized ABM-based code: although slightly less accurate in absolute
error, it preserves a somewhat better numerical convergence rate while
requiring a comparable, and in some cases smaller, computational cost.

Therefore, Table~\ref{tab:R2_nc_LIL_ABM} highlights two complementary facts.
First, the proposed \texttt{LIL\_nc} solver is clearly superior to the
classical non-accelerated \texttt{ABM\_nc} implementation, both in accuracy
and in CPU time. Second, when compared with the optimized FFT-based solver
\texttt{fde12\_nc2}, \texttt{LIL\_nc} remains highly competitive: although
\texttt{fde12\_nc2} is slightly more accurate on this benchmark,
\texttt{LIL\_nc} achieves errors of the same order of magnitude, with a
comparable computational cost and a slightly higher observed convergence
order. This confirms that the proposed LIL-based approach is an effective
alternative for non-commensurate fractional systems.

\section{Matlab Routine FO\_LE for LEs}
The Matlab routine \texttt{FO\_LE} designed to compute numerical finite-time LEs
of Benettin type for the FO system modeled by the IVP \eqref{eq:gen_system_components}-\eqref{eq:gen_system_ic} and the variational equations \eqref{eq:general_variational_components}-\eqref{eq:general_variational_ic} uses the QR decomposition. More precisely, the
original system and its variational equations are integrated simultaneously,
and after each renormalization interval \(h_{\mathrm{norm}}\) a QR
orthonormalization is performed. The exponents are then estimated from the
time-averaged logarithmic growth factors associated with the diagonal entries
of the \(R\) matrix. Therefore, the reported values should be interpreted as
finite-time numerical approximations of the asymptotic Lyapunov spectrum of
the fractional variational dynamics.

The method combines the Benettin algorithm with the solver
\texttt{LIL\_nc} \cite{lilnc} used to integrate the extended system.

More precisely, for each renormalization interval determined by
\(h_{\mathrm{norm}}\), the extended system is integrated by the \texttt{LIL\_nc} scheme. At the end of each interval, the
variational block is reshaped into a matrix \(A\in\mathbb{R}^{n_e\times
n_e}\), whose columns are the evolved perturbation vectors.
The routine then performs a reduced QR decomposition
\[
A=QR.
\]
The orthonormal factor \(Q\) provides the new perturbation basis, while the
diagonal entries of the upper triangular factor \(R\) yield the local
stretching rates. The LEs are obtained by accumulating the
quantities \(\log|R_{kk}|\) and dividing by the elapsed time. The matrix
\(Q\) is then reinserted into the extended state, and the procedure is
repeated until the final time is reached.

Hence, the algorithm consists of an alternation between fractional
integration by \texttt{LIL\_nc} and QR-based reorthonormalization, which
allows stable and efficient numerical approximation of the full Lyapunov
spectrum.

\begin{remark}
The routine \texttt{FO\_LE} can be interpreted as an impulsive algorithm
\cite{md1}. More precisely, the numerical procedure alternates between the
continuous-time integration of the extended FO system and
discrete orthonormalization actions performed successively on the
subintervals \([kh_{\mathrm{norm}},(k+1)h_{\mathrm{norm}}]\), \(k=0,1,\dots\),
where \(h_{\mathrm{norm}}\) is taken as a multiple of the integration step
size \(h\). At the end of each subinterval, the last computed state \(x\) is
used as the initial value for the next integration stage, after the
variational block has been modified by the QR decomposition. Although each
new integration stage starts at the current impulsive time, the Lyapunov
exponents must still be normalized with respect to the global initial time.
Therefore, \texttt{FO\_LE} naturally fits into an impulsive formulation.

As proved in \cite{md1}, this type of algorithm preserves the full memory
principle. Hence, similarly to the classical Gram--Schmidt-based approach,
the QR reorthonormalization steps do not destroy the fractional memory
structure, but are consistent with a changing-lower-limit impulsive
framework.
\end{remark}

To call the \texttt{FO\_LE} routine, the following command could be used
\begin{lstlisting}
[t,LE] = FO_LE(ne,ext_fcn,t_start, h_norm,...
t_end,x_start,h,out);
\end{lstlisting}
where $ne$ represents the number of equations modelling the system, $out$ is the printing step of LE values (optional, default $0$). Other supplementary, optional parameters, can be used (Appendix \ref{b})

\subsection{Examples}

Consider the FO RF system \cite{dd4}
\begin{equation}
\begin{cases}
{}^{C}D_t^{\alpha_1}x_1 = x_2(x_3 - 1 + x_1^2)+a x_1,\\[2mm]
{}^{C}D_t^{\alpha_2}x_2 = x_1(3x_3 + 1 - x_1^2) +a x_2,\\[2mm]
{}^{C}D_t^{\alpha_3}x_3 = -2x_3(b + x_1x_2),
\end{cases}
\label{eq:system}
\end{equation}
which for $a=-1$ and $b=-0.1$ has only 3 equilibria, compared to the existing 5 equilibria for other parameter values.

The equilibria of the system \eqref{eq:system} are
\[
E_0=(0,0,0),
\]
and
\[
E_{\pm}=
\left(
\pm 0.1479,\,
\pm 0.6759,\,1.1969
\right).
\]

\begin{proposition}
The trivial equilibrium \(E_0\) is unstable for any
fractional orders \(\alpha_1,\alpha_2,\alpha_3\in(0,1)\).
\begin{proof}
The Jacobian at \(E_0\) is
\[
J(E_0)=
\begin{pmatrix}
-1 & -1 & 0\\
1 & -1 & 0\\
0 & 0 & 0.2
\end{pmatrix},
\]
hence the linearized third equation decouples as
\[
{}^C D_t^{\alpha_3}x_3=0.2\,x_3.
\]
Since \(0.2>0\), this scalar fractional equation is unstable for every
\(\alpha_3\in(0,1)\).
The Mittag-Leffler function $E_\alpha(\lambda t^\alpha)$ grows faster than any polynomial for large $t$ when $\lambda>0$. Therefore, \(E_0\) is always unstable.
\end{proof}
\end{proposition}

\begin{remark}\label{rama}
The computed Lyapunov spectrum should be interpreted as the spectrum of the
numerically generated orbit starting from the prescribed initial condition.
Consequently, when the system admits several possible long-time behaviors, one
cannot determine beforehand which equilibrium, oscillatory regime, or chaotic
set this spectrum is associated with; this becomes clear only a posteriori,
from the observed asymptotic evolution of the trajectory.
\end{remark}

As known, for a commensurate fractional-order system, the stability criterion of an equilibrium $E$ states that it is locally asymptotically stable if for every eigenvalues $\lambda$
\[
|\arg(\lambda)|>\alpha\frac{\pi}{2}.
\]
If for some $\alpha$
\[
|\arg(\lambda)|<\alpha\frac{\pi}{2},
\]
the equilibrium is unstable, and if for some eigenvalue
\[
|\arg(\lambda)|=\alpha\frac{\pi}{2},
\]
the equilibrium may be marginally stable and requires further analysis.

In the incommensurate case $(\alpha_1,\alpha_2,...,\alpha_{ne})$, let $M$ the least common multiple (LCM) of the denominators of $\alpha_i$, $i=1,2,...,ne$, expressed as rational numbers and define
\[
\gamma=\frac{1}{M}.
\]
The equilibrium $E$ is locally asymptotically stable if
\begin{equation}\label{ee}
|\arg(\lambda)|>\gamma\frac{\pi}{2}
\end{equation}
for all roots $\lambda$ of the pseudo-polynomial equation
\begin{equation}\label{frac}
\det(\diag(\lambda^{M\alpha_1}, \lambda^{M\alpha_2},...,\lambda^{M\alpha_{ne}})-J(E))=0,
\end{equation}
where $J$ is the Jacobian.

\begin{enumerate}[leftmargin=0cm,labelsep=0cm,align=left]
\item[1) ]
Consider the commensurate order $\alpha_1=\alpha_2=\alpha_3=0.999$ (chosen close to 1 so that the dynamics of the system are close to the integer case).

In this case, the critical value is
\[
\theta^*=\alpha\frac{\pi}{2}\approx89.91^\circ
\]

The eigenvalues of $E_\pm$ are $\lambda_1\approx-2.15$, $\lambda_{2,3}=0.175\pm1.056 i$ and because $\arg(\lambda_{2,3})\approx80.6^\circ<\theta^*$, $E_\pm$ are unstable (saddle-focus points, due to the positive value of $\lambda_1$ and the complex values of $\lambda_{2,3}$, respectively).

The trajectory starting from the initial condition $(0.1,0.1,0.1)$ is chaotic (Fig. \ref{fig2} (a)), being obtained with the \texttt{LIL} command
\begin{lstlisting}
[t,x] = LIL(@(t,x)RF(t,x,-1,-0.1),[0,1000],...
[0.1,0.1,0.1],0.999,0.01);
\end{lstlisting}
The function \texttt{RF\_ext} is presented in Appendix \ref{c}.

The same result can be obtained with \texttt{LIL\_nc} for $\alpha=[0.999;0.999;0.999]$.

The LEs are obtained with the command
\begin{lstlisting}
[t,LE] = FO_LE(3,@RF_ext,0,0.2,1500,[...
    0.1;0.1;0.1],0.01,[0.999;0.999;0.999],500);
\end{lstlisting}
that gives the following result:
\[
LE=(0.1017,0.0000,-1.9048),
\]
The values, printed every $500\times h_{\mathrm{norm}}=100$, are given in the following table, and the time evolution of the LEs is presented in Fig. \ref{fig2} (b).
\begin{lstlisting}
      t           LE1           LE2           LE3
  100.0000   0.07723959  -0.03851780  -1.84169357
  200.0000   0.11416719  -0.08493444  -1.83229826
  300.0000   0.08996528  -0.00689660  -1.88609942
  400.0000   0.08766504  -0.00368054  -1.88701509
  500.0000   0.10138621  -0.00543670  -1.89897681
  600.0000   0.10195209  -0.00433533  -1.90063980
  700.0000   0.10120368   0.00135328  -1.90557693
  800.0000   0.10049402  -0.01295196  -1.89058138
  900.0000   0.09679697   0.00262462  -1.90243042
 1000.0000   0.09598765  -0.00109984  -1.89788413
 1100.0000   0.09253276  -0.00381408  -1.89172487
 1200.0000   0.09591835   0.00132116  -1.90024112
 1300.0000   0.09731926  -0.00005847  -1.90027265
 1400.0000   0.09787358  -0.00058637  -1.90029196
 1500.0000   0.10169812   0.00006493  -1.90477474
\end{lstlisting}

The positiveness of the largest LE, underlines the chaotic dynamics of the trajectory.

\vspace{3mm}
\item[2) ]
Consider $\alpha=[0.6;0.8;0.7]$

Since
\[
\alpha_1=\frac35,\qquad \alpha_2=\frac45,\qquad \alpha_3=\frac7{10},
\]
we obtain
\[
M=\operatorname{LCM}(5,5,10)=10,
\qquad
\gamma=\frac1M=\frac1{10}.
\]
Thus, the stability condition in the \(\lambda\)-plane is
\[
|\arg(\lambda)|>\theta^*
=\frac{\pi}{20}=9^\circ.
\]
The characteristic equation is polynomial of 21 degree. The smallest argument of the 21 roots $\lambda_k$, $k=1,2,...,21$ is
\[
\min_k|\arg(\lambda_k)|=12.223^\circ>\theta^*.
\]
Therefore, $E_{1,2}$ are locally asymptotically stable as shown by the trajectory in Fig \ref{fig1} (c), which tends to $E_{+}$, obtained with the command

\begin{lstlisting}
[t,x] = LIL_nc(@(t,x)RF(t,x,-1,-0.1),...
[0,500],[0.1,0.1,0.1],[0.6,0.8,0.7],0.01);
\end{lstlisting}

The LEs are given by
\begin{lstlisting}
[t,LE] = FO_LE(3,@RF_ext,0,0.2,1000,...
    [0.1;0.1;0.1],0.01,[0.6;0.8;0.7]);
\end{lstlisting}
\[
LE =(-0.0894,-0.1025,-2.9471),
\]
with the graph of the time evolution in Fig \ref{fig1} (d), in concordance with the graphical result.

\item[3) ] Consider $\alpha=(0.85,0.965,0.999)$.

Then
\[
M=1000
\]
and
\[
\gamma=\frac{1}{M}= \frac{1}{1000}
\]
The generalized
characteristic equation \eqref{frac} associated with the linearization at \(E_1\) and
\(E_2\) is
\[
\det\!\Bigl(
\operatorname{diag}(\lambda^{850},\lambda^{965},\lambda^{999})
-
J(E_{1,2})
\Bigr)=0,
\]
which expands to a 2814 degree polynomial equation. 
According to the generalized Matignon criterion, the equilibrium is
asymptotically stable if every root \(\lambda\) of this characteristic
equation satisfies \eqref{ee}
\[
|\arg(\lambda)|>\theta^*=\gamma\frac{\pi}{2}=\frac{\pi}{2000}\approx 0.0016\ \text{rad}\approx 0.09^\circ,
\]
i.e. the unstable sector is an extremely narrow wedge around the positive real
axis.

Numerically, the minim argument of all roots is

\[
\min_j|\arg(\lambda_j)|=0.0920>\theta^*,\quad j=1,2,...,2814.
\]
Therefore, all roots lie outside the unstable sector and $E_{-/+}$ satisfy the generalized stability
condition and are locally asymptotically stable.

The trajectory, after removed transients, is plotted in Fig. \ref{fig1} (d) and is obtained with the command
\begin{lstlisting}
[t,x] = LIL_nc(@(t,x)RF(t,x,-1,-0.1),...
[0,1000],[0.1,0.1,0.1],[0.85,0.965,0.999],0.01).
\end{lstlisting}

\begin{remark}
This trajectory does not represents a periodic stable trajectory, but an ``apparently'' (``numerically'') periodic, since non-constant periodic solutions do not exist in autonomous fractional systems modeled by Caputo's derivative \cite{tava}
\end{remark}

The LEs are obtained with the command

\begin{lstlisting}
[t,LE] = FO_LE(3,@RF_ext,0,1,1000,...
[-0.0831,0.1298,0.6658],0.01,[0.85,0.965,0.999]);
\end{lstlisting}
\[
LE =(-0.0007,-0.1303,-1.4903)
\]
with the time evolution presented in Fig. \ref{fig1} (e).
\end{enumerate}

Due to the high nonlinearity of the equations modelling the system, the LEs corresponding to the apparent periodic trajectory, can be determined only by considering the initial condition close to the trajectory \texttt{[-0.0831,0.1298,0.6658]} (see Remark \ref{rama}).
\subsection{Selection of the Main Numerical Parameters in the Routine FO\_LE}

The choice of the renormalization interval \(h_{\mathrm{norm}}\) remains, at
present, essentially empirical. Unlike the integration step \(h\), which is
directly related to the accuracy of the numerical solver \texttt{LIL\_nc}, the parameter
\(h_{\mathrm{norm}}\) determines the frequency of reorthonormalization in the
Benettin procedure, and its influence on the computed Lyapunov exponents is
more subtle. Although the QR decomposition itself does not provide a direct
theoretical formula for selecting \(h_{\mathrm{norm}}\), the two are closely
connected at the algorithmic level: \(h_{\mathrm{norm}}\) fixes the time span
over which the variational equations are integrated before each QR
reorthonormalization step. If \(h_{\mathrm{norm}}\) is too large, the
propagated perturbation vectors may become nearly linearly dependent, which
can deteriorate the numerical conditioning of the QR step. On the other hand,
if \(h_{\mathrm{norm}}\) is too small, reorthonormalization is performed too
frequently, and the associated finite-time stretching factors may become less
informative. Consequently, no universal theoretical criterion is currently
available for the optimal choice of \(h_{\mathrm{norm}}\), and its selection
must be based on numerical robustness considerations.

Our numerical experiments showed that no single rule
provides an optimal choice of $h_{norm}$ for all fractional-order configurations. In
particular, the most suitable value of \(h_{{norm}}\) depends on the
order vector \(\alpha\), so that a value performing well for one set of
orders may become less effective for another. For this reason, the choice of
\(h_{{norm}}\) in the present work was guided by a systematic
robustness study based on several complementary criteria: refinement with
respect to the integration step \(h\), variation of the renormalization step
\(h_{{norm}}\), extension of the final integration time \(T\), and
monitoring of the empirical robustness indicator \(E_{\max}\).

By
combining all the above criteria, we found that for the discretization
\(h=0.01\), values of \(h_{{norm}}\) around \(0.2\), that is,
\(h_{{norm}}\) taken as a multiple of the integration step \(h\),
provide a satisfactory compromise between stability, consistency of the
computed exponents, and computational efficiency.

The tests indicate that, within the ranges considered here, the computed
Lyapunov spectrum is substantially more sensitive to the choice of
\(h_{\mathrm{norm}}\) than to moderate variations of either \(h\) or the
final time \(T\). The final time \(T\) should be large enough for
the Lyapunov-exponent curves to enter a stable asymptotic regime and become
as smooth as possible, but not so large that accumulated numerical errors
begin to affect the results. In the present computations, the interval
\(T\in[1000,1500]\) was found to provide a reasonable compromise between
asymptotic stabilization and long-time numerical reliability \cite{sara,sara2}.

\section*{Conclusion}

This paper presents a Matlab routine for the numerical computation of
finite-time Lyapunov exponents of fractional-order systems in both the
commensurate and non-commensurate setting. The proposed implementation
preserves the classical Benettin framework, but replaces the usual
Gram--Schmidt orthonormalization by a QR-based procedure and uses the new memory-safe implicit quadratic LIL predictor--corrector solver
for the numerical integration of the extended variational system.

The resulting approach combines three ingredients: the variational
formulation for multi-order Caputo systems, repeated reorthonormalization of
the perturbation basis, and a full-memory implicit solver adapted to
fractional dynamics. In this way, the method remains faithful to the
theoretical structure of Lyapunov-exponent computation for systems with
memory, while providing a compact and numerically robust Matlab
implementation.

Moreover, the proposed method is an impulsive QR-based implementation which preserves the full memory principle of the fractional-order system, so that the repeated reorthonormalization steps do not destroy the intrinsic memory structure of the dynamics.

The benchmark test on a non-commensurate fractional system with known exact
solution shows that the proposed \texttt{LIL\_nc}-based approach is clearly
more efficient than a classical non-accelerated ABM implementation and
remains competitive with FFT-accelerated ABM-type solvers.

The examples based on the Rabinovich--Fabrikant system illustrate that the
routine can successfully detect different dynamical regimes, including
chaotic behavior and convergence toward asymptotically stable equilibria, in
both commensurate and non-commensurate cases. At the same time, the
computed spectrum must always be interpreted as the spectrum of the
numerically generated orbit corresponding to the chosen initial condition,
especially when several coexisting long-time regimes are possible.

Overall, the proposed code provides a new numerical alternative to the
previous ABM-based implementations for Lyapunov exponents of
fractional-order systems. It extends the available computational tools for
multi-order Caputo models and may serve as a useful basis for further
studies on stability, bifurcation, hidden attractors, and chaotic dynamics
in fractional-order systems.

\newcounter{appsec}
\renewcommand{\theappsec}{\Alph{appsec}}

\newcommand{\appsection}[2][]{%
  \refstepcounter{appsec}%
  \section*{Appendix \theappsec. #2}%
  \addcontentsline{toc}{section}{Appendix \theappsec. #2}%
  \ifx\\#1\\\else\label{#1}\fi
}

\newpage



\vspace{1.5em}
\appsection{LIL code}\label{a}
\begin{lstlisting}
function [t,y,itcount]=LIL(fun,tspan,y0,alpha,h,...
    tol,maxit)
%
% Implicit quadratic PECE solver for Caputo
% commensurate FDEs
%
%   D_t^alpha y(t) = f(t,y(t)),   0 < alpha < 1,
%   y(t0) = y0
%
% Marius-F. Danca, 2026
%
% Call:
%   [t,y,itcount] = LIL(fun,tspan,y0,alpha,h,tol,
%   maxit)
% INPUT
%   fun   - function handle, fun(t,y)
%   tspan - [t0,T]
%   y0    - initial condition (scalar or column
%           vector)
%   alpha - fractional order, 0<alpha<1
%   h     - time step
%   tol   - stopping tolerance for fixed-point
%           iteration (optional)
%   maxit - maximum allowed iterations per step
%           (optional)
% OUTPUT
%   t       - time grid
%   y       - numerical solution
%   itcount - number of fixed-point iterations at
%           each step
%   example:
%   [t,Y] = LIL(@(t,Y)lorenz_system(t,Y,10,28,8/3),
%           [0,50],[1,1,1],0.99,.01);
%   function lorenz_system must be in the same
%   folder with LIL.m.
%
%   cite as:
%
%   Marius-F. Danca, A High Order Method for Caputo
%   Fractional Differential Equations, manuscript,
%   2026
%
%   For non-commensurate fractional order, LIL_nc,
%   a fast routine, can be found at:
%
%   https://www.mathworks.com/matlabcentral/fileexchange/183478-lil_nc


    if nargin < 6 || isempty(tol)
        tol = 1e-12;
    end
    if nargin < 7 || isempty(maxit)
        maxit = 100;
    end
    if numel(tspan) ~= 2
        error('tspan must be [t0,T].');
    end
    t0 = tspan(1);
    T  = tspan(2);
    if ~(isscalar(T) && isscalar(t0) && T > t0)
        error(['tspan must satisfy tspan(2) > ' ...
            'tspan(1).']);
    end
    if ~(isscalar(alpha) && alpha > 0 && alpha < 1)
        error(['alpha must be a scalar with 0 < ' ...
            'alpha < 1.']);
    end
    if ~(isscalar(h) && h > 0)
        error('h must be a positive scalar.');
    end
    N = round((T - t0)/h);
    if N < 1
        error(['The step size h is too large for ' ...
            'the interval [t0,T].']);
    end
    if abs(t0 + N*h - T) > 1e-14 * max(1,abs(T))
        warning(['h does not divide [t0,T] exactly; ' ...
            'using N = round((T-t0)/h).']);
    end
    t = t0 + (0:N)'*h;
    y0 = y0(:);
    d  = length(y0);
    y = zeros(d,N+1);
    y(:,1) = y0;
    f0 = fun(t(1),y0);
    f0 = f0(:);
    if length(f0) ~= d
        error(['fun(t0,y0) must have the same ' ...
            'dimension ' 'as y0.']);
    end
    fval = zeros(d,N+1);
    fval(:,1) = f0;
    itcount = zeros(N,1);
    facP = h^alpha / gamma(alpha+1);
    facC = h^alpha / gamma(alpha);
    %-----------------------------------------
    % Predictor weights
    %-----------------------------------------
    b = zeros(N,1);
    for j = 0:N-1
        b(j+1) = (j+1)^alpha - j^alpha;
    end
    %-----------------------------------------
    % Exact algebraic interval contributions
    %-----------------------------------------
    J0all  = zeros(N,1);
    J1all  = zeros(N,1);
    Iplus  = zeros(N,1);
    Izero  = zeros(N,1);
    Iminus = zeros(N,1);
    for j = 0:N-1
        jp1 = j + 1;
        d0 = jp1^alpha     - j^alpha;
        d1 = jp1^(alpha+1) - j^(alpha+1);
        d2 = jp1^(alpha+2) - j^(alpha+2);
        A0 = d0 / alpha;
        A1 = jp1 * d0 / alpha - d1 / (alpha + 1);
        A2 = jp1^2 * d0 / alpha ...
           - 2 * jp1 * d1 / (alpha + 1) ...
           + d2 / (alpha + 2);
        J0all(j+1) = A0 - A1;
        J1all(j+1) = A1;
        Iplus(j+1)  = 0.5 * (A2 + A1);
        Izero(j+1)  = A0 - A2;
        Iminus(j+1) = 0.5 * (A2 - A1);
    end
    %--------------------------------------------
    % Time stepping
    %--------------------------------------------
    for n = 0:N-1
        % Predictor
        histP = zeros(d,1);
        for k = 0:n
            histP = histP + b(n-k+1) * fval(:,k+1);
        end
        yp = y0 + facP * histP;
        % Assemble coefficients C(0),...,C(n+1) for
        % step n+1
        C = zeros(1,n+2);
        if n == 0
            % First step: only startup interval [t0,t1]
            C(1) = J0all(1);   % coefficient of f_0
            C(2) = J1all(1);   % coefficient of f_1
        else
            % Startup interval [t0,t1] with lag j = n
            C(1) = C(1) + J0all(n+1);
            C(2) = C(2) + J1all(n+1);
            % Intervals [t_k,t_{k+1}], k = 1,...,n
            for k = 1:n
                j = n-k;
                % Contribution to f_{k+1}
                C(k+2) = C(k+2) + Iplus(j+1);
                % Contribution to f_k
                C(k+1) = C(k+1) + Izero(j+1);
                % Contribution to f_{k-1}
                C(k) = C(k) + Iminus(j+1);
            end
        end
        % Split off the newest implicit term
        cnew  = C(n+2);
        histC = zeros(d,1);
        for m = 0:n
            histC = histC + C(m+1) * fval(:,m+1);
        end
        % Fixed-point iteration
        yold = yp;
        for it = 1:maxit
            fnew = fun(t(n+2),yold);
            fnew = fnew(:);
            if length(fnew) ~= d
                error(['fun returned inconsistent ' ...
                    'dimension ' 'at step %d.'], n+1);
            end
            ynew = y0 + facC * (histC + cnew * fnew);
            if norm(ynew - yold, inf) <= tol * max(1, ...
                    norm(ynew, inf))
                yold = ynew;
                break;
            end
            yold = ynew;
        end
        if it == maxit
            warning(['Step %d: fixed-point ' ...
                'iteration reached ' 'maxit.'], n+1);
        end
        y(:,n+2) = yold;
        ftmp = fun(t(n+2),yold);
        ftmp = ftmp(:);
        if length(ftmp) ~= d
            error(['fun returned inconsistent ' ...
                'dimension at ' 'step %d.'], n+1);
        end
        fval(:,n+2) = ftmp;
        itcount(n+1) = it;
    end
    if d == 1
        y = y(:);
    end
end
\end{lstlisting}

\newpage
\appsection{FO\_LE code}\label{b}

The code \texttt{LIL}, or \texttt{LIL\_nc}, \texttt{FO\_LE} and the extended function modeling the system must to be placed in the same folder.

For faster execution, the solver \texttt{LIL} (yellow line 96) should be replaced with the fast integrator \texttt{LIL\_nc} \cite{lilnc}.

\begin{lstlisting}
function [t,LE,tv,LEv] = FO_LE(ne,ext_fcn,...
    t_start,h_norm,t_end,x_start,h,q,out,tol,...
    maxit,alphatol)
%
% Program to compute the spectrum of Lyapunov
% exponents for non-commensurate and commensurate
% fractional-order systems using the Benettin
% algorithm and LIL_nc solver
%
% LIL_nc repository:
%
% https://www.mathworks.com/matlabcentral/fileexchange/183478-lil_nc
%
% The orthonormalization step is performed by QR
% decomposition.
%
% Marius-F. Danca 2026
%
% INPUT
%   ne        - system dimension
%   ext_fcn   - function containing the extended
%               system
%   t_start   - initial time
%   h_norm    - renormalization step
%   t_end     - final time
%   x_start   - initial condition of the original
%               system
%   h         - integration step
%   q         - fractional-order vector
%   out       - printing step of LE values (optional,
%               default 0)
%   tol       - fixed-point stopping tolerance
%               (optional, default 1e-12)
%   maxit     - max fixed-point iterations per step
%               (optional, default 100)
%   alphatol  - tolerance for grouping equal orders
%               (optional, default 1e-14)
%
% OUTPUT
%   t    - final time
%   LE   - final Lyapunov exponents
%   tv   - time values at renormalization instants
%   LEv  - Lyapunov exponent values at
%          renormalization instants
%
% Example:
%
%   [t,LE] = FO_LE(3,@RF_ext,0,0.2,1500,...
%   [0.1;0.1;0.1],0.01,[0.999;0.999;0.999],500);
%
%   RF_ext is the extended function defining
%   the Rabinovich-Fabrikant system [1]
%
%   Cite:
%   [1] Marius-F. Danca, Improved Matlab code for
%   Lyapunov exponents of fractional order,
%   submitted, 2026
%

    if nargin < 9 || isempty(out)
        out = 0;
    end
    if nargin < 10 || isempty(tol)
        tol = 1e-12;
    end
    if nargin < 11 || isempty(maxit)
        maxit = 100;
    end
    if nargin < 12 || isempty(alphatol)
        alphatol = 1e-14;
    end
    x_start = x_start(:);
    q = q(:);
    % Extended order vector
    q = repmat(q, ne+1, 1);
    % Memory allocation
    x  = zeros(ne*(ne+1),1);
    x0 = x;
    c  = zeros(ne,1);
    zn = c;
    n_it = round((t_end-t_start)/h_norm);
    tv  = zeros(n_it,1);
    LEv = zeros(n_it,ne);
    % Initial values for extended system
    x(1:ne) = x_start;
    i = 1;
    while i <= ne
        x((ne+1)*i) = 1.0;
        i = i + 1;
    end
    t = t_start;
    LE = zeros(ne,1);
    % Main loop
    it = 1;
    while it <= n_it
        % Integrate extended system on [t,t+h_norm]
        [T,Y] = LIL_nc(ext_fcn,[t,t+h_norm],x,q,h,tol,maxit,alphatol);
        t = T(end);
        x = Y(:,end);
        % Rearrange variational block
        i = 1;
        while i <= ne
            j = 1;
            while j <= ne
                x0(ne*i+j) = x(ne*j+i);
                j = j + 1;
            end
            i = i + 1;
        end
        % Build perturbation matrix
        A = zeros(ne,ne);
        i = 1;
        while i <= ne
            j = 1;
            while j <= ne
                A(i,j) = x0(ne*i+j);
                j = j + 1;
            end
            i = i + 1;
        end
        % QR orthonormalization
        [Q,R] = qr(A,0);
        d = sign(diag(R));
        d(d == 0) = 1;
        D = diag(d);
        Q = Q*D;
        R = D*R;
        zn = abs(diag(R));
        tiny = 1e-300;
        i = 1;
        while i <= ne
            if zn(i) < tiny
                zn(i) = tiny;
            end
            i = i + 1;
        end
        % Update cumulative sums and LE values
        j = 1;
        while j <= ne
            c(j) = c(j) + log(zn(j));
            j = j + 1;
        end
        LE = c/(t - t_start);
        % Store history
        tv(it)    = t;
        LEv(it,:) = LE(:).';
        % Put orthonormalized basis back into x0
        i = 1;
        while i <= ne
            j = 1;
            while j <= ne
                x0(ne*i+j) = Q(i,j);
                j = j + 1;
            end
            i = i + 1;
        end
        % Replace variational block in x
        i = 1;
        while i <= ne
            j = 1;
            while j <= ne
                x(ne*j+i) = x0(ne*i+j);
                j = j + 1;
            end
            i = i + 1;
        end
        if out > 0
            if mod(it,out) == 0
                fprintf('%10.4f ',t);
                fprintf('%12.8f ',LE);
                fprintf('\n');
            end
        end
        it = it + 1;
    end
    % Plot only at the end
    figure;
    plot(tv,LEv,'.');
    xlabel('t','fontsize',12);
    ylabel('LEs','fontsize',12);
    set(gca,'fontsize',12);
    box on;
    line([t_start t_end],[0 0],'color','k');
    axis tight
end
\end{lstlisting}

\newpage

\appsection{Function RF\_ext.m}\label{c}
\begin{lstlisting}[belowskip=0pt]%[    basicstyle=\footnotesize, %or \small or \footnotesize etc.]

function f = RF_ext(~,x)
% Extended RF system for Lyapunov exponents.
%
% State variables:
%   x(1), x(2), x(3)
%
% Variational variables:
%   x(4:12) = variational matrix entries, packed columnwise
%
% Parameters:
a = -1.0;
b = -0.1;

% Original state
x1 = x(1);
x2 = x(2);
x3 = x(3);

% Original system
f = zeros(12,1);

f(1) = x2*(x3 - 1 + x1*x1) + a*x1;
f(2) = x1*(3*x3 + 1 - x1*x1) + a*x2;
f(3) = -2*x3*(b + x1*x2);

% Jacobian of the original system
J = [2*x1*x2 + a,        x1*x1 + x3 - 1,   x2;
     -3*x1*x1 + 3*x3 + 1, a,               3*x1;
     -2*x2*x3,           -2*x1*x3,        -2*(b + x1*x2)];

% Variational matrix Phi, stored columnwise in x(4:12)
Phi = reshape(x(4:12),3,3);

% Variational equations
dPhi = J*Phi;

% Pack back columnwise
f(4:12) = dPhi(:);
end

\end{lstlisting}

\newpage

\begin{algorithm}[H]
\begin{center}
\caption{Benettin algorithm for LEs }
\label{alg1} 
\begin{minipage}{\linewidth}
\hrule\vspace{1mm}
\textbf{Input:}
    number of equations \(n_e\),
initial condition
\(x_{\mathrm{start}}\),
time interval \([0,T]\),
renormalization step \(h_{\mathrm{norm}}\), integration step \(h\),
and fractional-order vector \(\alpha\).

\begin{enumerate}[leftmargin=*,label=\textup{(\arabic*)},itemsep=1pt,topsep=2pt]
\item Initialize the perturbation matrix with the identity \(I_{n_e}\).

\item Form the extended system consisting of the original FO system and its
variational equations.

\item For each renormalization interval:
\begin{enumerate}[leftmargin=1.5em,label=\textup{(\alph*)},itemsep=1pt,topsep=1pt]
\item integrate the extended system on \([t,t+h_{\mathrm{norm}}]\);

\item extract the perturbation matrix \(\Phi\);

\item \colorbox{yellow}{\parbox{0.78\linewidth}{orthonormalize its columns by GS;}}

\item \colorbox{yellow}{\parbox{0.78\linewidth}{compute the stretching factors \(z_k\), \(k=1,\dots,n_e\);}}

\item update the cumulative sums
\[
s_k \leftarrow s_k + \log z_k,\qquad k=1,\dots,n_e;
\]

\item update the current time \(t \leftarrow t+h_{\mathrm{norm}}\), and compute
\[
\lambda_k=\frac{s_k}{t},\qquad k=1,\dots,n_e
\]
\end{enumerate}

\item Return the time evolution and/or final values of the exponents.
\end{enumerate}

\vspace{1mm}\hrule
\end{minipage}
\end{center}
\end{algorithm}

\newpage

\begin{center}
\begin{minipage}{\textwidth}
\begin{framed}
\begin{center}
\textbf{Box 1. LIL Scheme}
\label{box:lil}
\end{center}

\medskip
\noindent\textbf{Predictor}
\begin{equation}
y_{n+1}^{(p)}
=
y_0+
\frac{h^\alpha}{\Gamma(\alpha+1)}
\sum_{k=0}^{n}
b_{n-k}^{(\alpha)}\,f_k,
\qquad
f_k=f(t_k,y_k),
\label{eq:lil_box_predictor}
\end{equation}
with coefficients
\begin{equation*}
b_j^{(\alpha)}=(j+1)^\alpha-j^\alpha,
\qquad j\ge 0.
\label{eq:lil_box_b}
\end{equation*}

\medskip
\textbf{Implicit corrector}
\begin{equation}
y_{n+1}
=
y_0+
\frac{h^\alpha}{\Gamma(\alpha)}
\left[
\omega_{n+1,n+1}^{(\alpha)}\,f(t_{n+1},y_{n+1})
+
\sum_{\ell=0}^{n}
\omega_{n+1,\ell}^{(\alpha)}\,f_\ell
\right],
\label{eq:lil_box_corrector}
\end{equation}
with coefficients
\begin{equation*}
\omega_{n+1,n+1}^{(\alpha)}
=
\frac{\alpha+4}{2\alpha(\alpha+1)(\alpha+2)},
\label{eq:lil_box_omega_diag}
\end{equation*}
and, for \(\ell=0,1,\dots,n\),
\begin{equation*}
\omega_{n+1,\ell}^{(\alpha)}
=
\frac{
(n+2-\ell)^{\alpha+2}
-3(n+1-\ell)^{\alpha+2}
+3(n-\ell)^{\alpha+2}
-(n-1-\ell)^{\alpha+2}
}{
\alpha(\alpha+1)(\alpha+2)
}.
\label{eq:lil_box_omega}
\end{equation*}
\textbf{Remark.}
For the non-commensurate case, the scheme is applied componentwise: in the
\(i\)-th equation, \(i=1,2,...,ne\), \(\alpha\), \(b_j^{(\alpha)}\), and
\(\omega_{n+1,\ell}^{(\alpha)}\) are replaced by
\(\alpha_i\), \(b_j^{(\alpha_i)}\), and
\(\omega_{n+1,\ell}^{(\alpha_i)}\), respectively.
\end{framed}
\end{minipage}
\end{center}

\newpage

\begin{table*}[ht]
\tbl{Comparison of \texttt{LIL\_nc}, \texttt{fde12\_nc2}, and classical \texttt{ABM\_nc} for the non-commensurate benchmark problem. Errors are measured in the discrete maximum norm.\label{tab:R2_nc_LIL_ABM}}
{
\centering
\small
\resizebox{\textwidth}{!}{%
\begin{tabular}{c|>{\columncolor{gray!15}}c>{\columncolor{gray!15}}c>{\columncolor{gray!15}}c|>{\columncolor{gray!30}}c>{\columncolor{gray!30}}c>{\columncolor{gray!30}}c|>{\columncolor{gray!15}}c>{\columncolor{gray!15}}c>{\columncolor{gray!15}}c}
\hline
& \multicolumn{3}{c|}{\cellcolor{gray!15}\texttt{LIL\_nc}} & \multicolumn{3}{c|}{\cellcolor{gray!30}\texttt{fde12\_nc2}} & \multicolumn{3}{c}{\cellcolor{gray!15}\texttt{ABM\_nc}} \\
\hline
$h$ & $\texttt{E}_{\mathrm{LIL}}$ & $p_{\mathrm{LIL}}$ & \texttt{CPU} \texttt{LIL\_nc} & $\texttt{E}_{\mathrm{fde12\_nc2}}$ & $p_{\mathrm{fde12\_nc2}}$ & \texttt{CPU} fde12\_nc2 & $\texttt{E}_{\mathrm{ABM}}$ & $p_{\mathrm{ABM}}$ & \texttt{CPU} \texttt{ABM\_nc} \\
\hline
$0.01$ & $1.54e-04$ & -- & 0.00385 & $1.07e-04$ & -- & 0.00293 & $5.97e-03$ & -- & 0.02913 \\
$0.005$ & $5.02e-05$ & 1.62 & 0.00732 & $4.12e-05$ & 1.38 & 0.00405 & $3.43e-03$ & 0.80 & 0.10353 \\
$0.0025$ & $1.64e-05$ & 1.61 & 0.01328 & $1.47e-05$ & 1.49 & 0.00823 & $1.97e-03$ & 0.80 & 0.39542 \\
$0.00125$ & $5.39e-06$ & 1.61 & 0.02538 & $5.06e-06$ & 1.54 & 0.01677 & $1.13e-03$ & 0.80 & 1.53893 \\
$0.000625$ & $1.77e-06$ & 1.60 & 0.05995 & $1.71e-06$ & 1.57 & 0.03252 & $6.52e-04$ & 0.80 & 6.16438 \\
\hline
\end{tabular}
}
}
\end{table*}

\newpage

\begin{figure}[H]
\centering
\includegraphics[width=\columnwidth]{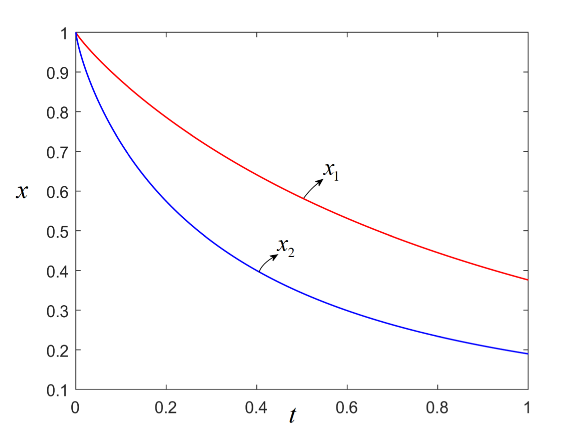}
\caption{The solutions of the problem \eqref{eq:R2_nc_test}.}
\label{fig1}
\end{figure}

\end{multicols}
\clearpage

\begin{figure}[H]
\centering
\includegraphics[width=0.95\textwidth]{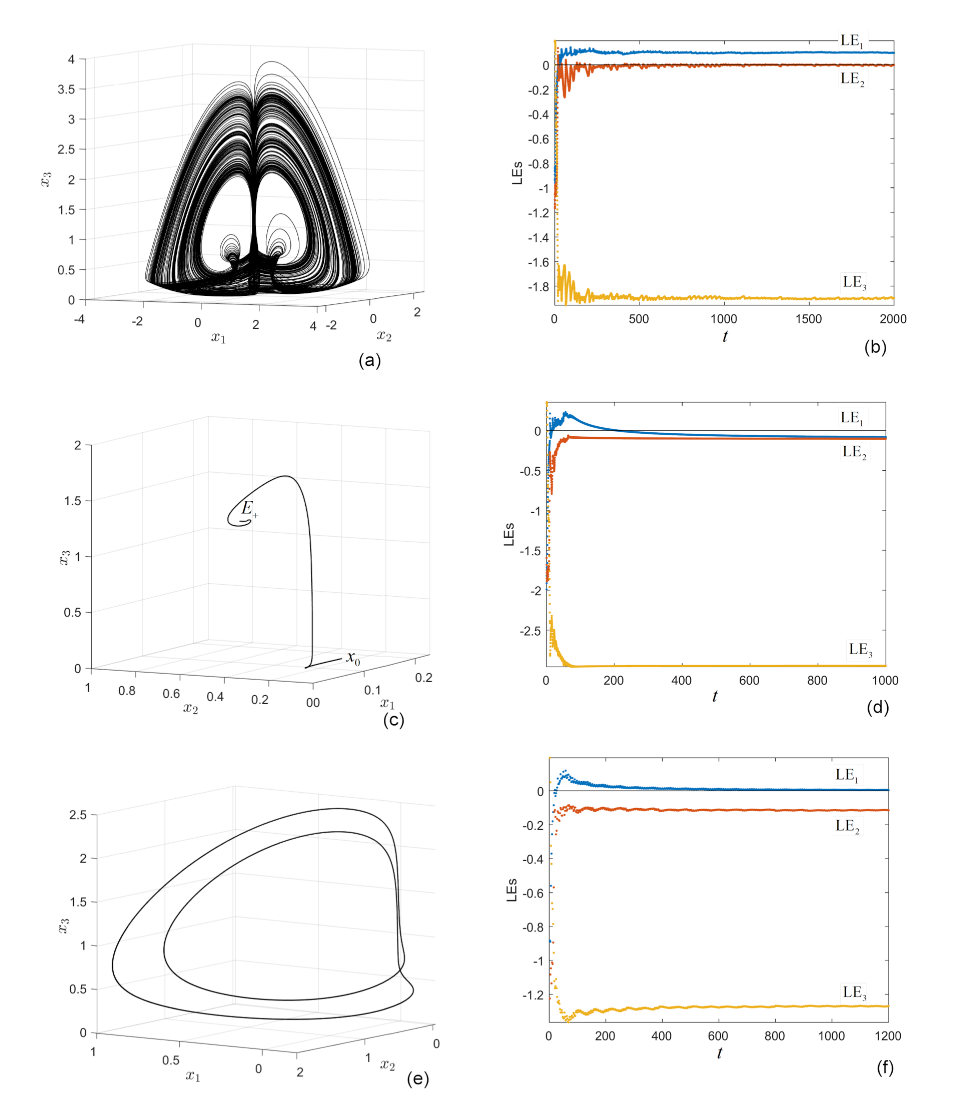}
\label{fig2}
\caption{Phase plot of the numerical solutions of obtained with thhe \texttt{LIL\_nc} integrator and time evolution of LEs, obtained with the routine \texttt{FO\_LIL}, for the RF system. (a), (b) Phase plot of the solution and time evolution of LEs for $\alpha=0.999$, respectively; (c), (d) Phase plot and time evolution of the LEs for $\alpha=(0.6,0.8,0.7)$, respectively; (e),(f) Phase plot of the solution and time evolution of LEs for $\alpha=(0.85, 0.965, 0.999)$, respectively.}
\end{figure}

\newpage{\pagestyle{empty}\cleardoublepage}
\newpage{\pagestyle{empty}\cleardoublepage}


\begin{thebibliography}{100}

	\bibitem[FO\_NC\_Lyapunov(2025)]{noncom} \url{https://www.mathworks.com/matlabcentral/fileexchange/122377-matlab-code-for-les-of-non-commensurate-fo}

	\bibitem[LIL(2026)]{lil} \url{https://www.mathworks.com/matlabcentral/fileexchange/183396-lil}

	\bibitem[LILnc(2026)]{lilnc} \url{https://www.mathworks.com/matlabcentral/fileexchange/183478-lil_nc}

	\bibitem[FO\_Lyapunov(2025)]{com} \url{https://www.mathworks.com/matlabcentral/fileexchange/my-file-exchange?s_tid=gn_mlc_fx_myfx}


	\bibitem[Baker \& Gollub(1990)]{baker} Baker, G.L. \& Gollub, P.G.[1990)] \emph{Chaotic Dynamics - An Introduction Chaotic Dynamics- An Introduction}, Cambridge University Press, New York.

	\bibitem[Benettin et al.(1980)]{bene2} Benettin, G., Galgani, L., Giorgilli, A. \& Strelcyn, J.-M. [1980].``{L}yapunov characteristic exponents for smooth dynamical systems and for hamiltonian systems. {A} method for computing all of them. {P}art {II}: Numerical application,'' {\em Meccanica}, \textbf{15}(1), 21--30.

	\bibitem[Caputo(1967)]{cap} Caputo, M. [1967] ``Linear models of dissipation whose {Q} is almost frequency independent-{II},''{\em Geophys. J. Roy. Astr. S.}, \textbf{13}(5), 529--539.

	\bibitem[Christiansen \& Rugh(1997)]{christ} Christiansen, F . \& Rugh, H. H. [1997] ``Computing Lyapunov spectra with continuous Gram - Schmidt orthonormalization,'', \emph{Nonlinearity}, \textbf{10}(5), 1063.

	\bibitem[Danca and Fe\v{c} an(2024)] {md1} Danca, M.-F. \& Fe\v{c}kan, M. [2024] ``Memory principle of the MATLAB code for Lyapunov exponents of fractional-order systems'', \textit{International Journal of Bifurcation and Chaos}, 2450156.



	\bibitem[Danca(2026)]{dd5} Danca, M.-F. [2026] ``A High Order Method for Caputo Fractional Differential Equations'', submitted.

	\bibitem[Danca \& Kuznetsov(2018)]{dd2} Danca, M.-F. \& Kuznetsov, N. [2018] ''Matlab code for Lyapunov exponents of fractional-order systems'', \textit{International Journal of Bifurcation and Chaos}, 28(5), 1850067.

	\bibitem[Danca(2021)]{dd3} Danca, M.-F. [2021] ''Matlab code for Lyapunov exponents of fractional-order systems, Part II: The non-commensurate case,''
\textit{International Journal of Bifurcation and Chaos}, 31(12), 2150187, 2021.


	\bibitem[Danca et al.(2019)]{dd4} Danca, M.-F., Bourke, P. \& Kuznetsov, N. [2019] ``Graphical structure of attraction basins of hidden attractors: the Rabinovich--Fabrikant system'', \textit{International Journal of Bifurcation and Chaos}, 29(1), 1930001.

	\bibitem[Diethlem \& Ford(2004)]{diex} Diethelm, K. \& Ford, N. J. [2004] ``Multi-order fractional differential equations and their numerical solution'', \emph{Applied Mathematics and Computation, }154, 3, 621-640

	\bibitem[Diethelm \& Ford(2002)]{existenta} Diethelm, K. \& Ford, N.J. [2002] ``Analysis of fractional differential equations,'' {\em J. Math. Anal. Appl.}, \textbf{265}(2),229--248.

	\bibitem[Diethelm et al.(2002)]{kai} Diethelm, K., Ford, N.J. \& Freed, A.D. [2002] ``A predictor-corrector approach for the numerical solution of fractional differential equations,''
{\em Nonlinear Dynam.}, \textbf{29}(1), 3--22.

	\bibitem[Eckmann \& D.~Ruelle(1985)]{lili} Eckmann, J.-P.\& Ruelle D.[1985] ``Ergodic theory of chaos and strange attractors,'' {\em Rev. Mod. Phys.}, \textbf{57}(3), 617--656.

	\bibitem[Garrappa(2018)]{gara} Garrappa, R. [2018] ``Numerical Solution of Fractional Differential Equations: A Survey and a Software Tutorial'', \emph{Mathematics }, 6(2), 16.

	\bibitem[Gorenflo \& Mainardi(1997)]{gore} Gorenflo R. \& Mainardi F.[1997] \emph{Fractional Calculus}. In: Carpinteri A., Mainardi F. (eds) Fractals and Fractional Calculus in Continuum Mechanics. International Centre for Mechanical Sciences (Courses and Lectures), vol 378. Springer, Vienna.

	\bibitem[Kilbas \& Trujillo(2001)]{kill} Kilbas, A.A. \& Trujillo, J.J.[2001]`` Differential equations of fractional order. Methods, results and problems, I,'' \emph{Appl. Anal.}, 78(1-2): 153-192.

	\bibitem[Oldham \& Spanier(1974)]{old} Oldham, K.B. Spanier, J. [1974]{\em The Fractional Calculus: Theory and Applications of Differentiation and Integration to Arbitrary Order}.
Academic Press, New York.

	\bibitem[Podlubny(2002)]{pod} Podlubny, I. [2002] ``Geometric and physical interpretation of fractional integration and fractional differentiation,'' {\em Frac. Calc. Appl. Anal.}, \textbf{5}(4):367--386.

	\bibitem[Podlubny(1999)]{pod2} Podlubny, I.[1999] \emph{Fractional Differential Equations: An Introduction to Fractional Derivatives, Fractional Differential Equations, to Methods of their Solution and some of their Applications,} Academic Press, Dan Diego.

	\bibitem[Sarra \& Meador(2011)]{sara} Sarra, S.A. \& Meador C. [2011]. ``On the numerical solution of chaotic dynamical systems using extend precision floating point arithmetic and very high order numerical methods,'' {\em Nonlinear Analysis: Modelling and Control}, \textbf{16}(3), 340--352.

	\bibitem[Shimada \& Nagashima(1979)]{shima} Shimada, I. \& Nagashima, T. [1979], ``A Numerical Approach to Ergodic Problem of Dissipative Dynamical Systems,'' \emph{Prog. Theor. Phys.} \textbf{61}(6) 1605--1616.

	\bibitem[Tavazoei \& Haeri(2009)] {tava} Tavazoei, M.S. \& Haeri, M. [2009] ``A proof for non existence of periodic solutions in time invariant fractional order systems,'' \emph{Automatica, } \textbf{45}(8), 1886--1890.

	\bibitem[Wang et al.(2012)]{sara2} Wang, P., Li, J., \& Li, Q [2012] ``Computational uncertainty and the application of a high-performance multiple precision scheme to obtaining the correct reference solution of {L}orenz equations,'' {\em Numerical Algorithms}, \textbf{59}(1), 147--159.


\end{thebibliography}
\end{document}